\documentclass[conference]{IEEEtran}

\usepackage{amssymb}
\usepackage{amsmath}
\usepackage{graphicx}
\usepackage{cite}
\usepackage{citesort}
\usepackage{subfigure}
\usepackage{graphicx,epstopdf}
\usepackage{epsfig}	
\usepackage{cite,graphicx,amsmath,amssymb}
\usepackage{comment}
\usepackage{lipsum}


\newtheorem{theorem}{\textbf{Theorem}}

\newtheorem{lemma}{\textbf{Lemma}}

\newtheorem{corollary}{\textbf{Corollary}}

\makeatletter
\def\ScaleIfNeeded{%
\ifdim\Gin@nat@width>\linewidth \linewidth \else \Gin@nat@width
\fi } \makeatother

\begin{document}
%

\title{\Huge{Millimeter Wave Power Transfer and Information Transmission}}

\author{
\IEEEauthorblockN{ Lifeng Wang\IEEEauthorrefmark{1}, Maged Elkashlan\IEEEauthorrefmark{1}, Robert W. Heath, Jr.\IEEEauthorrefmark{2}, Marco Di Renzo\IEEEauthorrefmark{3}, and Kai-Kit Wong\IEEEauthorrefmark{4}} \IEEEauthorblockA{
\IEEEauthorrefmark{1}School of Electronic Engineering and Computer
Science, Queen Mary University of London, London, UK\\
\IEEEauthorrefmark{2} Department of Electrical and Computer Engineering, The University of Texas at Austin,
Austin, Texas, USA\\
\IEEEauthorrefmark{3} CNRS-CENTRAL/SUPELEC-University Paris-
Sud XI, 3 rue Joliot-Curie, 91192 Gif-sur-Yvette, France\\
\IEEEauthorrefmark{4} Department of Electronic and
Electrical Engineering, University College London, London, UK} }

\maketitle

\begin{abstract}
 Compared to the existing lower frequency wireless power transfer, millimeter wave (mmWave) power transfer takes advantage of the high-dimensional multi-antenna and narrow beam transmission. In this paper we introduce wireless power transfer for mmWave cellular networks. Here, we consider users with large energy storage that are recharged by the mmWave base stations prior to uplink information transmission, and analyze the average harvested energy and average achievable rate. Numerical results corroborate our analysis and show that the serving base station plays a dominant role in wireless power transfer, and the contribution of the interference power from the interfering base stations is negligible, even when the interfering base stations are dense. By examining the average achievable rate in the uplink, when increasing the base station density, a transition from a noise-limited regime to an interference-limited regime is observed.

\end{abstract}

\section{Introduction}
Triggered by the recent development of efficient rectifier circuit design and wireless networks with multi-antenna techniques such as beamforming, wireless power transfer has regained attention. Compared to traditional energy sources such as solar, wind, and thermoelectric, the advantages of wireless power transfer using radio frequency (RF) energy are at least two-fold:  1) it is independent of the environment and can be applied in any places; and 2) it is flexible and can be scheduled at any time.

The existing literature studies wireless power transfer that operates in the networks using lower frequency bands. In \cite{Kaibin2014}, the deployment of power beacons for powering a cellular network was investigated based on a stochastic geometry network model. In \cite{Seunghyun2013}, secondary transmitters in cognitive radio were proposed to harvest energy from the primary transmissions as well as reuse the spectrum from the primary transmitters. The impacts of network density and power splitting RF harvesting on the outage performance and the harvested energy was analyzed in \cite{Krikidis_SP_RF_2014}. The performance of a wireless sensor powered by ambient RF energy was presented in \cite{Flint_Globecom}, where the RF energy sources were assumed to be located following a Ginibre $\alpha$-determinantal point process.

A limitation of prior lower frequency work is that power transfer may affect the quality of service of the existing cellular networks, since the surrounding energy signals from the energy sources such as power beacons~\cite{Kaibin2014} may result in large interference inflicted on the information receiver. In addition, the current cellular spectrum is heavily utilized. The millimeter wave (mmWave) spectrum is emerging as the new mobile broadband~\cite{Zhouyue2011_magazine,TED2013IEEE_Access}, which provides large bandwidths for ultra-high data rates.  MmWave spectrum is also a promising solution for the large-scale in-band backhaul~\cite{Sooyoung2013}. With this in mind, mmWave transmission is a key enabler in future fifth generation (5G) cellular networks.

We believe that  mmWave is a promising and potentially highly rewarding candidate for wireless power transfer, due to the following factors:
 \begin{itemize}
   \item \textbf{High frequencies} Unlike the current wireless power transfer in lower frequencies, mmWave power transfer has no impact on the existing cellular transmissions, since it operates in higher frequencies.

   \item \textbf{Narrow beams} In mmWave systems, the narrow beams or directed beams are typically used, which can exploit directivity gains.

   \item \textbf{Large array gains} Due to the shorter wavelengths, large antenna arrays can be easily deployed in mmWave systems, which can bring large array gains.

    \item \textbf{Dense networks} In future networks, mmWave base stations (BSs) will be densely deployed. As such, the distance between the user and the serving BS is shorter compared to that of existing cellular networks, which decreases the path loss.

 \end{itemize}
In fact, some rectifier circuit designs for mmWave power
transfer have been proposed in the literature such as~\cite{Yu_Jiun2007,Shinohara2011}. More recently,~\cite{Ladan2013} proposed a dual diode  rectifier circuit operating at K-band, which can achieve 40$\%$ RF-to-DC conversion efficiency driven by a 35 mW input power.

In this paper, we propose wireless power transfer in mmWave cellular networks. The user harvests energy from the serving BS, then uses the harvested energy to transmit information messages. We consider directional transmission and reception, with analog processing and phase shifters. Stochastic geometry is employed to model the positions of the BSs with blockage effects as in \cite{Tianyang_arxiv2014}. Our results demonstrate that the interference power received at the user has little contribution to the amount of energy harvested. In the uplink, we find that
 mmWave power transfer networks are noise-limited when the BSs are not super dense. Increasing the density of the BSs beyond a critical point, however, will shift the network into an interference-limited regime.

\section{System Description}\label{system_UA}
We consider a mmWave time-division duplex (TDD) cellular network, where users are powered by the outdoor mmWave BSs via mmWave power transfer prior to information transmission in the uplink. To eliminate the BS-to-BS and user-to-user interference, transmissions in different cells are synchronous\footnote{The synchronous operation with a common uplink-downlink configuration in multiple cells is typically used in the current TDD cellular networks~\cite{Zukang2012}.}.  The locations of the BSs follow a homogeneous Poisson point process
 (PPP) $\Phi$ with density $\rho$. The users are located following a homogeneous PPP $\Psi$ with density $\mathcal{U}$. It is assumed that the user density is much larger than the BS density such that there exists one active mobile at each time slot
in each cell. Each BS is equipped with $M$-element antenna array and each user is equipped with $N$-element antenna array.
 In light of the blockage effects in the outdoor scenario, a user is associated with either a
 line-of-sight (LoS) mmWave BS or a non-line-of-sight (NLoS) mmWave BS~\cite{Tianyang2014_Sept_com}.
 Let $\Phi_\mathrm{LoS}$ be the point process of LoS BSs  and $\Phi_\mathrm{NLoS}=\Phi\setminus \Phi_\mathrm{LoS}$ be
 the point process of NLoS BSs.
 We denote $f_\mathrm{Pr}\left(R\right)$ as the probability that a link at a distance $R$ is LoS, while the NLoS probability of a link
 is $1-f_\mathrm{Pr}\left(R\right)$.  The LoS probability function $f_\mathrm{Pr}\left(R\right)$ can be obtained from
 field measurements or stochastic blockage models\footnote{When the blockages are modeled as a rectangle boolean scheme, $f_\mathrm{Pr}\left(R\right)=e^{-\varrho R}$, where $\varrho$ is a parameter determined by
 the density and the average size of the blockages~\cite{Tianyang_arxiv2014}.}, as mentioned in \cite{Tianyang_arxiv2014}.
 Each user is connected to the BS which has the smallest path loss\footnote{Since each BS has the same transmit power and uses the same transmission scheme, user association based on the smallest path loss is equivalent to that based on the maximum receive power.}, and users within the same cell are scheduled based on time-division multiple-access (TDMA).

The sparse scattering mmWave environment makes many traditional fading distributions inaccurate for the modeling of
the mmWave channel~\cite{Ayach2014}. In mmWave transmission, the strongest physical path occupies the dominant role, which indicates that the link budget is cut by transmitting and
receiving along the strongest path~\cite{Zhouyue_2012_asilomar}. Moreover, with the application of highly directional antennas, large antenna arrays and
 broadband signal transmission, the impact of small-scale fading on the received signal power is negligible~\cite{TED2013IEEE_Access}. As such,
 we focus on the dominant propagation path. Accordingly, the channel model is established
 as $ {\bf{H}} = \sqrt{ L\left(R\right)} {\bf{a}}_\mathrm{r}\left(\theta_\mathrm{\mathrm{r}}\right){{\bf{a}}_\mathrm{t}^\dag}\left(\theta_\mathrm{t}\right)$,
where $L\left(R\right)$ is the path loss function, $\theta_\mathrm{r}$ is the angle of arrival (AoA) and $\theta_\mathrm{t}$ is the angle of
departure (AoD), ${\bf{a}}_\mathrm{r}\left(\theta_\mathrm{r}\right)$ is the receive array response vector, and ${\bf{a}}_\mathrm{t}\left(\theta_\mathrm{t}\right)$ is
 the transmit array response vector. For an $\ell$-element uniform linear array (ULA), its corresponding array response vector is written as ${\bf{a}}_\mathrm{ULA}\left(\theta\right)={\left[ {1,{e^{ - j2\pi \frac{d}{\lambda }\sin \left( {{\theta}} \right)}}, \cdots ,{e^{ - j2\pi \left( {\ell - 1} \right)\frac{d}{\lambda }\sin \left( {{\theta}} \right)}}} \right]^T}\in {\mathcal{C}^{\ell\times 1}}$ at the direction $\theta$, where $\lambda $ is the wavelength of propagation, and $d$ is the antenna spacing\footnote{The array response vector for uniform planar array can be seen in~\cite{Ayach2014}.}.
We consider two different path loss laws: $L\left(R\right)=\beta_\mathrm{LoS} R^{-\alpha_\mathrm{LoS}}$ is the path loss law for LoS channel and $L\left(R\right)=\beta_\mathrm{NLoS} R^{-\alpha_\mathrm{NLoS}}$ is the path loss law for NLoS channel, where $\beta_\mathrm{LoS}$, $\beta_\mathrm{NLoS}$ are the frequency dependent constant values and $\alpha_\mathrm{LoS}$, $\alpha_\mathrm{NLoS}$ are the path loss exponents~\cite{Tianyang2014_Sept_com}.

Due to the high cost of mmWave RF chains and power consumption, the low-cost low-complexity analog beamforming is an appealing approach in mmWave transmission. Hence we adopt analog beamforming with phase shifters~\cite{Doan2004}. Particularly,
the low-complexity matched filter (MF) is adopted at the mmWave BSs and users.

\subsection{Power Transfer Model}
In the power transfer phase, users are powered by the BSs. We assume that a typical user is located at the origin $o$. Let $\theta_{\mathrm{t}_o}$  and $\theta_{\mathrm{r}_o}$ denote the AoD and AoA, respectively, the serving BS can orient its transmit beam along $\theta_{\mathrm{t}_o}$ by using MF ${\bf{w}}_{\mathrm{t}_o}=\frac{1}{\sqrt M}{\bf{a}}_\mathrm{t}\left( {{\theta_{\mathrm{t}_o}}} \right)$, and the typical user can orient its receive beam along $\theta_{\mathrm{r}_o}$ by using MF ${\bf{w}}_{\mathrm{r}_o}=\frac{1}{\sqrt N}{\bf{a}}_\mathrm{r}^\dag\left( {{\theta_{\mathrm{r}_o}}} \right)$. Likewise, for a user located along ${\theta_{\mathrm{t}_k}}$ in the $k$-th cell, its serving BS's transmit beam is ${\bf{w}}_{\mathrm{t}_k}=\frac{1}{\sqrt M}{\bf{a}}_\mathrm{t}\left( {{\theta_{\mathrm{t}_k}}} \right)$ while the typical user $o$ is located along ${\vartheta_{{\mathrm{t}_k}}}$  seen by the $k$-th cell BS. We use the short-range propagation model~\cite{Baccelli2006_gen,Kaibin2014} to avoid singularity caused by proximity between the BSs and the users. This ensures that users receive finite average power. Hence, the downlink power transfer channel from the $k$-th BS to the typical user can be modeled
as
\begin{align}\label{PT_Channel_model}
{\bf{H}}_k = \sqrt {L\left( \max\left\{ {{R_k}}, \mathrm{D}\right\}\right)} {{\bf{a}}_{\mathrm{r}}}\left( {{\vartheta _{{\mathrm{r}_k}}}} \right){\bf{a}}_{\mathrm{t}}^\dag\left( {{\vartheta_{{\mathrm{t}_k}}}} \right),
\end{align}
where $\mathrm{D}>0$ is the reference distance and ${\vartheta _{\mathrm{r}_k}},   \vartheta_{{\mathrm{t}_k}}\sim\emph{U}\left(0,2\pi\right)$~\cite{Akdeniz2014}. Since the energy harvested from the receiver's noise is negligible, the receive power at a typical user $o$ is written as
\begin{align}\label{receive_power}
&\hspace{-0.2 cm} {P_{\mathrm{r}_o}}=\frac{{P_{\mathrm{mm}}}}{NM} L\left(\max\left\{  {{R_o}}, \mathrm{D}\right\}\right)\left|{\bf{a}}_\mathrm{r}^\dag\left( {{\theta_{\mathrm{r}_o}}} \right) {\bf{a}}_\mathrm{r}\left( {{\theta_{\mathrm{r}_o}}} \right){\bf{a}}_\mathrm{t}^\dag\left( {{\theta_{\mathrm{t}_o}}} \right) {\bf{a}}_\mathrm{t}\left( {{\theta_{\mathrm{t}_o}}} \right)\right|^2 +\nonumber\\
&\hspace{-0.2 cm} \frac{{{P_{\mathrm{mm}}}}}{{NM}}\sum\limits_{k \in \Phi \setminus \left\{ o \right\} } L\left(\max\left\{{ {{R_k}}}, \mathrm{D}\right\}\right) \left| {\bf{a}}_\mathrm{r}^\dag \left( {{\theta _{\mathrm{r}_o}}} \right){{\bf{a}}_\mathrm{r}}\left( {{\vartheta _{\mathrm{r}_k}}} \right){{\bf{a}}_\mathrm{t}^\dag}\left( {{\vartheta _{\mathrm{t}_k}}} \right){\bf{a}}_\mathrm{t} \left( {{\theta _{\mathrm{t}_k}}} \right)\right|^2\nonumber\\
&\hspace{-0.2 cm} = \underbrace {NM{P_{\mathrm{mm}}}L\left(\max\left\{  {{R_o}}, \mathrm{D}\right\}\right)}_{\mathrm{En}_1}+ \underbrace { \frac{{{P_{\mathrm{mm}}}}}{{NM}}\sum\limits_{k \in \Phi \setminus \left\{ o \right\} } {\hbar _k} L\left( \max\left\{{{{R_k}}}, \mathrm{D}\right\}\right)}_{\mathrm{En}_2},
\end{align}
where  $\mathrm{En}_1$ is the receive power from the serving BS and $\mathrm{En}_2$ is the receive power from the interfering BSs, $P_{\mathrm{mm}}$ is the BS's transmit power, ${\hbar _k}=\frac{{\left( {1 - \cos \left( {N{\varpi _{{\vartheta _{{\mathrm{r}_k}}},{\theta _{{\mathrm{r}_o}}}}}} \right)} \right)
\left( {1 - \cos \left( {M{\varpi _{{\vartheta _{{\mathrm{t}_k}}},{\theta _{{\mathrm{t}_k}}}}}} \right)} \right)}}{{\left( {1 - \cos \left( {{\varpi _{{\vartheta _{{\mathrm{r}_k}}},{\theta _{{\mathrm{r}_o}}}}}} \right)} \right)
\left( {1 - \cos \left( {{\varpi _{{\vartheta _{{\mathrm{t}_k}}},{\theta _{{\mathrm{t}_k}}}}}} \right)} \right)}}$, $\theta _{{\mathrm{t}_k}}\sim\emph{U}\left(0,2\pi\right)$, and
${\varpi _{\vartheta,{\theta}}}  = 2\pi \frac{d}{\lambda }\left( {\sin \left( {{\vartheta}} \right) -
\sin \left( {{\theta}} \right)} \right)$.

\textbf{Remark 1}: From \eqref{receive_power}, we see that in mmWave systems, the user can at least be recharged by the input DC power $\eta NM{P_{\mathrm{mm}}}L\left(\max\left\{  {{R_o}}, \mathrm{D}\right\}\right)$, where $\eta$ is the RF-to-DC conversion efficiency.

\subsection{Uplink Information Transmission}
After energy harvesting,  user $\mathrm{u}_i$ transmits information signals to the serving BS by setting a transmit power value $P_{\mathrm{u}_i}$. The signal-to-interference-plus-noise ratio (SINR) at a typical serving BS is given by
\begin{align}\label{Eq_1_inform}
\mathrm{SINR}=\frac{MN{P_{\mathrm{u}_o}}L\left( {\max\left\{  {{R_o}}, \mathrm{D}\right\}} \right)}{\underbrace{\frac{{1}}{{MN}}\sum\limits_{i \in \widetilde{\mathcal{U}} \setminus \left\{ o \right\} } {P_{\mathrm{u}_i}} L\left( \max\left\{  {{R_i}}, \mathrm{D}\right\}\right)
{\hbar_i}}_{\mathrm{I}_\mathrm{U}}+\delta^2},
\end{align}
where ${\hbar_i}=\frac{{\left( {1 - \cos \left( {M{\varpi _{{\vartheta _{{\mathrm{r}_i}}},{\theta _{{\mathrm{r}_o}}}}}} \right)} \right)\left( {1 - \cos \left( {N{\varpi _{{\vartheta _{{\mathrm{t}_i}}},{\theta _{{\mathrm{t}_i}}}}}} \right)} \right)}}{{\left( {1 - \cos \left( {{\varpi _{{\vartheta _{{\mathrm{r}_i}}},{\theta _{{\mathrm{r}_o}}}}}} \right)} \right)\left( {1 - \cos \left( {{\varpi _{{\vartheta _{{\mathrm{t}_i}}},{\theta _{{\mathrm{t}_i}}}}}} \right)} \right)}}$, $\widetilde{\mathcal{U}}$ is the point process corresponding to the interfering users, $\mathrm{I}_{\mathrm{U}}$ is the uplink interference from the other  users, and $\delta^2$ is the noise power at the typical serving BS.

\begin{figure*}[!t]
\normalsize
 \begin{align}\label{average_power_transfer_1}
& {\overline P}_\mathrm{r_o}= {\overline P}_\mathrm{r_o}^\mathrm{Low}+\frac{{{P_{\mathrm{mm}}}}}{{NM}}\overline{\hbar} \beta_\mathrm{LoS}
\left(2 \pi \rho\right)^2 \bigg\{\Big.\nonumber\\
&+\int_{0}^{\infty}{ {\left[\int_{x}^{\infty}
{ \left(\max\left\{t,\mathrm{D}\right\}\right)^{-\alpha_\mathrm{LoS}}}f_\mathrm{Pr}\left(t\right) t dt \right] }  x{f_{\Pr }}
\left( x \right){e^{ - 2\pi \rho \left[ {\Theta \left( x \right) + \Xi \left( {{\varphi_\mathrm{LoS}}
\left( x \right)} \right)} \right]}} dx}\nonumber\\
&+\int_{0}^{\infty}{ {\left[\int_{\varphi_\mathrm{NLoS}\left(x\right)}^{\infty}
{ \left(\max\left\{t,\mathrm{D}\right\}\right)^{-\alpha_\mathrm{LoS}}}f_\mathrm{Pr}\left(t\right) t dt \right] }x{\left(1-f_{\Pr }\left( x \right)\right)}
{e^{ - 2\pi \rho \left[ \Theta \left(\varphi_\mathrm{NLoS}\left(x\right)\right)+ {\Xi \left( x\right)} \right]}} dx}\bigg\}\nonumber\\
&+\frac{{{P_{\mathrm{mm}}}}}{{NM}}\overline{\hbar} \beta_\mathrm{NLoS}
\left(2 \pi \rho\right)^2 \bigg\{\Big.\nonumber\\
&+\int_{0}^{\infty}{ {\left[\int_{\varphi_\mathrm{LoS}\left(x\right)}^{\infty}
{ \left(\max\left\{t,\mathrm{D}\right\}\right)^{-\alpha_\mathrm{NLoS}}} \left(1-f_\mathrm{Pr}\left(t\right)\right) t dt\right] } x{f_{\Pr }}
\left( x \right){e^{ - 2\pi \rho \left[ {\Theta \left( x \right) + \Xi \left( {{\varphi_\mathrm{LoS}}
\left( x \right)} \right)} \right]}} dx}\nonumber\\
&+ \int_{0}^{\infty}{ {\left[\int_{x}^{\infty}
{ \left(\max\left\{t,\mathrm{D}\right\}\right)^{-\alpha_\mathrm{NLoS}}} \left(1-f_\mathrm{Pr}\left(t\right)\right) t dt\right] }x{\left(1-f_{\Pr }\left( x \right)\right)}
{e^{ - 2\pi \rho \left[ \Theta \left(\varphi_\mathrm{NLoS}\left(x\right)\right)+ {\Xi \left( x\right)} \right]}} dx}\bigg\}
 \end{align}
\hrulefill \vspace*{0pt}
\end{figure*}

\section{Millimeter Wave Powered Networks}\label{mmWave_WPT}
In this section, the average harvested energy is derived, assuming all users are equipped with large energy storage.
We also assume that each frame time
duration is $T$.
 There are two time slots in each frame: 1) In the first time slot,
the user receives the power from the serving BS, which occupies $\phi T$ time, where $\phi$ ($0<\phi<1$) is the fraction of the time; and 2) In the second time slot, the user transmits the information
signals to the serving BS, which occupies $\left(1-\phi\right) T$ time.

Due to large energy storage, users can transmit the signal with reliable transmit power after energy harvesting. Considering the fact that the energy consumed for uplink information transmission should not exceed the harvested energy,
 a stable transmit power up to $\eta\frac{\phi}{\left(1-\phi\right)}\mathbb{E}\left\{  P_\mathrm{r_o}\right\}$ can be provided, as suggested in~\cite{Kaibin2014}. As such, the average receive power
$ {\overline P}_\mathrm{r_o}=\mathbb{E}\left\{  P_\mathrm{r_o}\right\}$ is pivotal in this case.
Therefore, we focus on the average receive power and have the following theorem:
 \begin{theorem}
\emph{ The average receive power} $ {\overline P}_\mathrm{r_o}$ \emph{is derived as \eqref{average_power_transfer_1} at the top of the next page, where $\overline{\hbar}$ is given by \eqref{h_bar_1}, and the lower bound ${\overline P}_\mathrm{r_o}^\mathrm{Low}$ for the average receive power ${\overline P}_\mathrm{r_o}$ is given by}
\begin{align}\label{average_transmit_power}
& \hspace{-0.4cm}{\overline P}_\mathrm{r_o}^\mathrm{Low}= NM{P_{\mathrm{mm}}}{{2\pi \rho }} \bigg\{ \beta_\mathrm{LoS}{\rm{D}}^{-\alpha_\mathrm{LoS}}\nonumber\\
&\hspace{-0.4cm} \times \int_0^{\rm{D}} x{f_{\Pr }}
\left( x \right){e^{ - 2\pi \rho \left[ {\Theta \left( x \right) + \Xi \left( {{\varphi_\mathrm{LoS}}
\left( x \right)} \right)} \right]}} {dx}\nonumber\\
&\hspace{-0.4cm}+\beta_\mathrm{LoS} \int_{\rm{D}}^\infty x^{1-\alpha_\mathrm{LoS}} {f_{\Pr }}
\left( x \right){e^{ - 2\pi \rho \left[ {\Theta \left( x \right) + \Xi \left( {{\varphi_\mathrm{LoS}}
\left( x \right)} \right)} \right]}}  {dx} \nonumber\\
&\hspace{-0.4cm}+\beta_\mathrm{NLoS}{\rm{D}}^{-\alpha_\mathrm{NLoS}} \int_0^{\rm{D}} x{\left(1-f_{\Pr }\left( x \right)\right)}
{e^{ - 2\pi \rho \left[ \Theta \left(\varphi_\mathrm{NLoS}\left(x\right)\right)+ {\Xi \left( x\right)} \right]}} {dx}\nonumber\\ &\hspace{-0.4cm}+\beta_\mathrm{NLoS}\int_{\rm{D}}^\infty x^{1-\alpha_\mathrm{NLoS}} {\left(1-f_{\Pr }\left( x \right)\right)}
{e^{ - 2\pi \rho \left[ \Theta \left(\varphi_\mathrm{NLoS}\left(x\right)\right)+ {\Xi \left( x\right)} \right]}} {dx}\bigg\},
\end{align}
\emph{where } $\Theta \left( x \right) = \int_0^x {t f_{\Pr}\left( t \right)dt}$,
$\Xi \left( x \right) = \int_0^x {\left( {1 - {f_{\Pr }}\left( t \right)} \right)tdt}$, $\varphi_\mathrm{LoS}\left(x\right)=
\left( {\frac{{{\beta_\mathrm{NLoS}}}}{{{\beta_\mathrm{LoS}}}}} \right)^{1/\alpha_\mathrm{NLoS}}
x^{{\alpha_\mathrm{LoS}}/{\alpha_\mathrm{NLoS}}}$, and $\varphi_\mathrm{NLoS}\left(x\right)=
\left( {\frac{{{\beta_\mathrm{LoS}}}}{{{\beta_\mathrm{NLoS}}}}} \right)^{1/\alpha_\mathrm{LoS}}
x^{{\alpha_\mathrm{NLoS}}/{\alpha_\mathrm{LoS}}}$.
\begin{proof}
According to \eqref{receive_power},  the average receive power can be written as
\begin{align}\label{Proof_exact_1}
 {\bar P}_\mathrm{r_o}=\mathbb{E}\left\{ {P_{\mathrm{r}_o}}\right\}=\mathbb{E}\left\{{\mathrm{En}_1}\right\}+\mathbb{E}\left\{{\mathrm{En}_2}\right\}.
\end{align}
Note that without considering the average inter-cell interfering power $\mathbb{E}\left\{{\mathrm{En}_2}\right\}$, $\mathbb{E}\left\{{\mathrm{En}_1}\right\}$ can be thought of as  the lower bound of $ {\bar P}_\mathrm{r_o}$, hence we first calculate the lower bound $ {\bar P}_\mathrm{r_o}^\mathrm{Low}=\mathbb{E}\left\{{\mathrm{En}_1}\right\}$. Due to the fact that a typical user can be connected to either a LoS BS or a NLoS BS,  using the law of total expectation, the lower bound for the average receive power is calculated as
\begin{align}\label{low_power_der1}
  {\overline P}_\mathrm{r_o}^\mathrm{Low}&=\mathbb{E}\left\{{\mathrm{En}_1}\right\}\nonumber\\
 &=\Lambda_\mathrm{LoS}\mathbb{E}\left\{{\mathrm{En}_1|\mathrm{LoS}}\right\}+
\Lambda_\mathrm{NLoS}\mathbb{E}\left\{{\mathrm{En}_1|\mathrm{NLoS}}\right\},
\end{align}
where $\Lambda_\mathrm{LoS}$ represents the probability that the typical user is connected to a LoS BS and $\Lambda_\mathrm{NLoS}=1-\Lambda_\mathrm{LoS}$ represents the probability that the typical user is connected to a NLoS BS.
We first derive $\mathbb{E}\left\{{\mathrm{En}_1|\mathrm{LoS}}\right\}$ as
\begin{align}\label{low_power_der2}
&\mathbb{E}\left\{{\mathrm{En}_1|\mathrm{LoS}}\right\}=\mathbb{E}\left\{ NM{P_{\mathrm{mm}}}L\left(\max\left\{  {{R_o}}, \mathrm{D}\right\}\right)|\mathrm{LoS}\right\}\nonumber\\
&=NM{P_{\mathrm{mm}}}\int_0^\infty L\left(\max\left\{  {x}, \mathrm{D}\right\}\right) f_R^{\mathrm{LoS}}\left( x \right){dx}\nonumber\\
& =NM{P_{\mathrm{mm}}}\beta_\mathrm{LoS}\Big({\rm{D}}^{-\alpha_\mathrm{LoS}} \int_0^{\rm{D}} f_R^{\mathrm{LoS}}\left( x \right){dx} \nonumber\\
&\hspace{0.4 cm}+\int_{\rm{D}}^{\infty} x^{-\alpha_\mathrm{LoS}} f_R^{\mathrm{LoS}}\left( x \right){dx}  \Big),
\end{align}
where $f_R^{\mathrm{LoS}}\left( x \right)$ is the conditional
probability density function (PDF) of the distance $R$ between the user and the serving LoS BS given that the user is connected to a LoS BS, which is given by \cite[Lemma 3]{Tianyang_arxiv2014}
\begin{align}\label{LoS_Distance_conditionPDF}
f_R^{\mathrm{LoS}}\left( x \right) = \frac{{2\pi \rho }}{{{\Lambda _\mathrm{LoS}}}}x{f_{\Pr }}
\left( x \right){e^{ - 2\pi \rho \left[ {\Theta \left( x \right) + \Xi \left( {{\varphi_\mathrm{LoS}}
\left( x \right)} \right)} \right]}}.
\end{align}
We then derive $\mathbb{E}\left\{{\mathrm{En}_1|\mathrm{NLoS}}\right\}$ as
\begin{align}\label{En1_NLOS_LOW}
&\mathbb{E}\left\{{\mathrm{En}_1|\mathrm{NLoS}}\right\}=\mathbb{E}\left\{ NM{P_{\mathrm{mm}}}L\left(\max\left\{  {{R_o}}, \mathrm{D}\right\}\right)|\mathrm{NLoS}\right\}\nonumber\\
&=NM{P_{\mathrm{mm}}}\int_0^\infty L\left(\max\left\{  {x}, \mathrm{D}\right\}\right) f_R^{\mathrm{NLoS}}\left( x \right){dx}\nonumber\\
& =NM{P_{\mathrm{mm}}}\beta_\mathrm{NLoS}\Big({\rm{D}}^{-\alpha_\mathrm{NLoS}} \int_0^{\rm{D}} f_R^{\mathrm{NLoS}}\left( x \right){dx} \nonumber\\
&\hspace{0.4 cm}+ \int_{\rm{D}}^{\infty} x^{-\alpha_\mathrm{NLoS}} f_R^{\mathrm{NLoS}}\left( x \right){dx}\Big) \bigg\},
\end{align}
where
 $f_R^{\mathrm{NLoS}}\left( x \right)$ is the conditional
PDF of the distance $R$ between the user and the serving NLoS BS given that the user is connected to a NLoS BS, which is given by~\cite[Lemma 3]{Tianyang_arxiv2014}
\begin{align}\label{NLoS_Distance_conditionPDF}
\hspace{-0.4cm} f_R^{\mathrm{NLoS}}\left( x \right) = \frac{{2\pi \rho }}{{{\Lambda _\mathrm{NLoS}}}}x{\left(1-f_{\Pr }\left( x \right)\right)}
{e^{ - 2\pi \rho \left[ \Theta \left(\varphi_\mathrm{NLoS}\left(x\right)\right)+ {\Xi \left( x\right)} \right]}}.
\end{align}

Substituting  \eqref{low_power_der2} and  \eqref{En1_NLOS_LOW} into \eqref{low_power_der1}, we get the lower bound expression for the average receive power.

Similar to \eqref{low_power_der1}, $\mathbb{E}\left\{{\mathrm{En}_2}\right\}$ in \eqref{Proof_exact_1} is derived as
\begin{align}\label{Sec_Part_Exp_En}
\hspace{-0.3 cm} \mathbb{E}\left\{{\mathrm{En}_2}\right\}=\Lambda_\mathrm{LoS}\mathbb{E}\left\{{\mathrm{En}_2|\mathrm{LoS}}\right\}+
\Lambda_\mathrm{NLoS}\mathbb{E}\left\{{\mathrm{En}_2|\mathrm{NLoS}}\right\}.
\end{align}
As mentioned in \cite{Tianyang_arxiv2014}, we  assume that the correlations of the shadowing between the links are ignored. By using the thinning theorem of PPP~\cite{Baccelli2009}, the LOS BS process $\Phi_\mathrm{LoS}$ and the NLOS BS process $\Phi_\mathrm{NLoS}$
form two independent non-homogeneous PPPs with density functions $f_\mathrm{Pr}\left(R\right)\rho$ and $\left(1-f_\mathrm{Pr}\left(R\right)\right) \rho$
in polar coordinates, respectively. Therefore, $\mathbb{E}\left\{{\mathrm{En}_2|\mathrm{LoS}}\right\}$ can be divided into two independent components as follows:
\begin{align}\label{LT_1_2_Exp}
&\hspace{-0.2 cm}\mathbb{E}\left\{{\mathrm{En}_2|\mathrm{LoS}}\right\}\nonumber\\
&\hspace{-0.2 cm}=\frac{{{P_{\mathrm{mm}}}}}{{NM}}\mathbb{E}\left\{\sum\nolimits_{k \in \Phi_\mathrm{LoS} \setminus \left\{{B\left(o;R_o\right) }\right\}} {\hbar _k} L\left( \max\left\{{{{R_k}}}, \mathrm{D}\right\}\right)\right\}\nonumber\\
&\hspace{-0.2 cm}+\frac{{{P_{\mathrm{mm}}}}}{{NM}}\mathbb{E}\left\{\sum\nolimits_{k \in \Phi_\mathrm{NLoS} \setminus \left\{{B\left(o;\varphi_\mathrm{LoS}\left(R_o\right)\right) }\right\} } {\hbar _k} L\left( \max\left\{{{{R_k}}}, \mathrm{D}\right\}\right)\right\},
\end{align}
where $B\left(o;x\right)$ denotes ball of radius $x$ centered at the origin $o$. Using Campbell's theorem~\cite{Baccelli2009}, \eqref{LT_1_2_Exp} can be derived as
\begin{align}\label{En_2_equation_Exp}
&\mathbb{E}\left\{{\mathrm{En}_2|\mathrm{LoS}}\right\}\nonumber\\
&=\frac{{{P_{\mathrm{mm}}}}}{{NM}}\overline{\hbar} \beta_\mathrm{LoS}\times\nonumber\\
&\hspace{-0.4 cm}\int_{0}^{\infty}{ {\left[\int\nolimits_{{\mathbb{R}^{\rm{2}}} \setminus B\left(o;x\right)}
{ \left(\max\left\{t,\mathrm{D}\right\}\right)^{-\alpha_\mathrm{LoS}}}\mathbb{M}\left(dt\right)\right] }f_R^{\mathrm{LoS}}\left( x \right) dx}\nonumber\\
&+\frac{{{P_{\mathrm{mm}}}}}{{NM}}\overline{\hbar} \beta_\mathrm{NLoS}\times\nonumber\\
&\hspace{-0.7 cm}\int_{0}^{\infty}{ {\left[\int\nolimits_{{\mathbb{R}^{\rm{2}}} \setminus {B\left(o;\varphi_\mathrm{LoS}\left(x\right)\right) }}
{ \left(\max\left\{t,\mathrm{D}\right\}\right)^{-\alpha_\mathrm{NLoS}}}\mathbb{M}\left(dt\right)\right] }f_R^{\mathrm{LoS}}\left( x \right) dx}
\\
& \hspace{-0.6 cm}\mathop  = \limits^{\left(a\right)} \frac{{{P_{\mathrm{mm}}}}}{{NM}}\overline{\hbar} \beta_\mathrm{LoS}2 \pi \rho \times \nonumber\\
&\hspace{-0.6 cm}\int_{0}^{\infty}{ {\left[\int_{x}^{\infty}
{ \left(\max\left\{t,\mathrm{D}\right\}\right)^{-\alpha_\mathrm{LoS}}}f_\mathrm{Pr}\left(t\right) t dt \right] }f_R^{\mathrm{LoS}}\left( x \right) dx}\nonumber\\
&\hspace{-0.6 cm}+\frac{{{P_{\mathrm{mm}}}}}{{NM}}\overline{\hbar} \beta_\mathrm{NLoS}
2 \pi \rho\times \nonumber\\
&\hspace{-0.6 cm}\int_{0}^{\infty}{ {\left[\int_{\varphi_\mathrm{LoS}\left(x\right)}^{\infty}
{ \left(\max\left\{t,\mathrm{D}\right\}\right)^{-\alpha_\mathrm{NLoS}}} \left(1-f_\mathrm{Pr}\left(t\right)\right) t dt\right] }f_R^{\mathrm{LoS}}\left( x \right) dx},
\end{align}
where  $\mathbb{M}\left(\mathcal{A}\right)$ in (14) represents the mean number of points of a spatial point process in $\mathcal{A}$~\cite{Baccelli2009},  $\left(a\right)$ results from using the polar-coordinate system and $\overline{\hbar}$  is given by
\begin{align}\label{h_bar_1}
&\overline{\hbar}=\mathbb{E}\left\{ \hbar _k\right\}=\mathbb{E}\left\{\frac{{\left( {1 - \cos \left( {N{\varpi _{{\vartheta _{{\mathrm{r}_k}}},{\theta _{{\mathrm{r}_o}}}}}}
\right)} \right)}}{\left( {1 - \cos \left( {{\varpi _{{\vartheta _{{\mathrm{r}_k}}},{\theta _{{\mathrm{r}_o}}}}}} \right)} \right)}\right\}\nonumber\\
&\times \mathbb{E}\left\{\frac{\left( {1 - \cos \left( {M{\varpi _{{\vartheta _{{\mathrm{t}_k}}},
{\theta _{{\mathrm{t}_k}}}}}} \right)} \right)}
{{\left( {1 - \cos \left( {{\varpi _{{\vartheta _{{\mathrm{t}_k}}},
{\theta _{{\mathrm{t}_k}}}}}} \right)} \right)}}\right\}\nonumber\\
&=\int_{\rm{0}}^{{\rm{2}}\pi } {\int_0^{2\pi } {\frac{{\left( {1 - \cos \left( {N{\varpi _{\vartheta _{{\mathrm{r}_k}},{\theta _{{\mathrm{r}_o}}}}}} \right)} \right)}}
{{\left( {1 - \cos \left( {{\varpi _{\vartheta _{{\mathrm{r}_k}},{\theta _{{\mathrm{r}_o}}}}}} \right)} \right)}}\frac{{\vartheta _{{\mathrm{r}_k}}{\theta _{{\mathrm{r}_o}}}}}{{4{\pi ^2}}}d{\vartheta _{{\mathrm{r}_k}}}d{{\theta _{{\mathrm{r}_o}}}}} } \times \nonumber\\
& \int_{\rm{0}}^{{\rm{2}}\pi } {\int_0^{2\pi } {\frac{{\left( {1 - \cos \left( {M{\varpi _{{\vartheta _{{\mathrm{t}_k}}},{\theta _{{\mathrm{t}_k}}}}}} \right)} \right)}}
{{\left( {1 - \cos \left( {{\varpi _{{\vartheta _{{\mathrm{t}_k}}},{\theta _{{\mathrm{t}_k}}}}}} \right)} \right)}}\frac{{{\vartheta _{{\mathrm{t}_k}}}{\theta _{{\mathrm{t}_k}}}}}{{4{\pi ^2}}}d{{\vartheta _{{\mathrm{t}_k}}}}d{\theta _{{\mathrm{t}_k}}}} }.
\end{align}

Similar to \eqref{LT_1_2_Exp}, $\mathbb{E}\left\{{\mathrm{En}_2|\mathrm{NLoS}}\right\}$ is given by
\begin{align} \label{En_2_NLOS_Exp}
&\mathbb{E}\left\{{\mathrm{En}_2|\mathrm{NLoS}}\right\} \nonumber\\
&=\frac{{{P_{\mathrm{mm}}}}}{{NM}} \overline{\hbar}\mathbb{E}\left\{\sum\nolimits_{k \in \Phi_\mathrm{LoS} \setminus \left\{{B\left(o;\varphi_\mathrm{NLoS}\left(R_o\right)\right) }\right\}} L\left( \max\left\{{{{R_k}}}, \mathrm{D}\right\}\right)\right\}\nonumber\\
&+\frac{{{P_{\mathrm{mm}}}}}{{NM}}\overline{\hbar}\mathbb{E}\left\{\sum\nolimits_{k \in \Phi_\mathrm{NLoS} \setminus \left\{{B\left(o;R_o\right) }\right\} } L\left( \max\left\{{{{R_k}}}, \mathrm{D}\right\}\right)\right\}\nonumber\\
&=\frac{{{P_{\mathrm{mm}}}}}{{NM}}\overline{\hbar} \beta_\mathrm{LoS}
2 \pi \rho \times \nonumber\\
&\int_{0}^{\infty}{ {\left[\int_{\varphi_\mathrm{NLoS}\left(x\right)}^{\infty}
{ \left(\max\left\{t,\mathrm{D}\right\}\right)^{-\alpha_\mathrm{LoS}}}f_\mathrm{Pr}\left(t\right) t dt \right] }f_R^{\mathrm{NLoS}}\left( x \right) dx}\nonumber\\
&+\frac{{{P_{\mathrm{mm}}}}}{{NM}}\overline{\hbar} \beta_\mathrm{NLoS}
2 \pi \rho \times \nonumber \\
&\int_{0}^{\infty}{ {\left[\int_{x}^{\infty}
{ \left(\max\left\{t,\mathrm{D}\right\}\right)^{-\alpha_\mathrm{NLoS}}} \left(1-f_\mathrm{Pr}\left(t\right)\right) t dt\right] }f_R^{\mathrm{NLoS}}\left( x \right) dx}.
\end{align}
Substituting (16) and \eqref{En_2_NLOS_Exp} into \eqref{Sec_Part_Exp_En}, we get $\mathbb{E}\left\{{\mathrm{En}_2}\right\}$. Substituting $\mathbb{E}\left\{{\mathrm{En}_1}\right\}$ in \eqref{low_power_der1}
and $\mathbb{E}\left\{{\mathrm{En}_2}\right\}$ into \eqref{Proof_exact_1}, we obtain the desired result given by \eqref{average_power_transfer_1}.
\end{proof}
\end{theorem}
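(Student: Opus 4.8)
The plan is to exploit linearity of expectation, writing $\overline{P}_\mathrm{r_o} = \mathbb{E}\{P_{\mathrm{r}_o}\} = \mathbb{E}\{\mathrm{En}_1\} + \mathbb{E}\{\mathrm{En}_2\}$ so that the serving-BS contribution and the interference contribution can be handled separately, and to evaluate each via the law of total expectation conditioned on whether the typical user is served by a LoS or a NLoS BS. Since $\mathbb{E}\{\mathrm{En}_1\}$ is precisely the claimed lower bound, I would compute it first.

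For the lower bound I would write $\mathbb{E}\{\mathrm{En}_1\} = \Lambda_\mathrm{LoS}\,\mathbb{E}\{\mathrm{En}_1\mid\mathrm{LoS}\} + \Lambda_\mathrm{NLoS}\,\mathbb{E}\{\mathrm{En}_1\mid\mathrm{NLoS}\}$ and, in each conditional term, integrate $NM P_\mathrm{mm} L(\max\{x,\mathrm{D}\})$ against the conditional serving-distance density $f_R^\mathrm{LoS}(x)$ or $f_R^\mathrm{NLoS}(x)$ supplied by \cite[Lemma 3]{Tianyang_arxiv2014}. The short-range truncation $\max\{x,\mathrm{D}\}$ forces me to split each integral at $x=\mathrm{D}$: on $[0,\mathrm{D}]$ the path loss is the constant $\beta\,\mathrm{D}^{-\alpha}$, and on $(\mathrm{D},\infty)$ it is $\beta x^{-\alpha}$. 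A helpful structural point is that the association probabilities $\Lambda_\mathrm{LoS},\Lambda_\mathrm{NLoS}$ in the prefactors cancel the $1/\Lambda$ normalizations hidden in the conditional densities, leaving the clean four-integral form of $\overline{P}_\mathrm{r_o}^\mathrm{Low}$ in \eqref{average_transmit_power}.

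For the interference term I would again condition on the serving type, and the first step is to detach the beamforming gain from the geometry. Because the AoDs and AoAs $\vartheta_{\mathrm{r}_k},\vartheta_{\mathrm{t}_k},\theta_{\mathrm{t}_k},\theta_{\mathrm{r}_o}$ are uniform and independent of the BS positions, the expectation of each $\hbar_k$ is the deterministic constant $\overline{\hbar}$ of \eqref{h_bar_1}, which factors out of the sum. What remains is an expected sum of truncated path losses over the interfering BSs, which I would evaluate with Campbell's theorem. To accommodate the LoS/NLoS dichotomy I would invoke the thinning theorem so that $\Phi_\mathrm{LoS}$ and $\Phi_\mathrm{NLoS}$ become independent inhomogeneous PPPs of intensities $\rho f_\mathrm{Pr}(R)$ and $\rho(1-f_\mathrm{Pr}(R))$, then pass to polar coordinates, which is what produces the $2\pi\rho\int\cdots t\,dt$ structure in \eqref{average_power_transfer_1}.

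The main obstacle will be pinning down the four exclusion regions correctly, since association is by smallest path loss (equivalently, largest $L(R)$) rather than by smallest Euclidean distance. When the typical user is served by a LoS BS at distance $R_o$, every interfering LoS BS obeys $\beta_\mathrm{LoS} r^{-\alpha_\mathrm{LoS}} \le \beta_\mathrm{LoS} R_o^{-\alpha_\mathrm{LoS}}$ and hence lies outside $B(o;R_o)$; but an interfering NLoS BS at distance $r'$ need only satisfy $\beta_\mathrm{NLoS}(r')^{-\alpha_\mathrm{NLoS}} \le \beta_\mathrm{LoS} R_o^{-\alpha_\mathrm{LoS}}$, which rearranges to $r' \ge \varphi_\mathrm{LoS}(R_o)$, so that sub-sum excludes the ball of radius $\varphi_\mathrm{LoS}(R_o)$. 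The mirror-image reasoning with $\varphi_\mathrm{NLoS}$ governs the NLoS-serving case. Keeping these four radii, and their correct pairing with $\beta_\mathrm{LoS}$ versus $\beta_\mathrm{NLoS}$, straight is the delicate part; once the regions are fixed, summing $\mathbb{E}\{\mathrm{En}_1\}$ and $\mathbb{E}\{\mathrm{En}_2\}$ into \eqref{average_power_transfer_1} is routine.
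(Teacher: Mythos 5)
Your proposal is correct and follows essentially the same route as the paper: the decomposition $\mathbb{E}\{\mathrm{En}_1\}+\mathbb{E}\{\mathrm{En}_2\}$, the law of total expectation over LoS/NLoS association with the conditional serving-distance PDFs of \cite[Lemma 3]{Tianyang_arxiv2014}, the factoring out of $\overline{\hbar}$, and the thinning-plus-Campbell evaluation of the interference with exactly the four exclusion radii $R_o$, $\varphi_\mathrm{LoS}(R_o)$, $\varphi_\mathrm{NLoS}(R_o)$ that the paper uses. Your explicit derivation of the exclusion radii from the smallest-path-loss association rule is the one point the paper leaves implicit, and it is handled correctly.
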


\section{Performance Evaluation}
In this section, we examine the average achievable rate for the uplink. Based on Section \ref{mmWave_WPT}, each user sets its transmit power to a stable value $P_\mathrm{u}=\eta\frac{\phi}{\left(1-\phi\right)}{\overline P}_\mathrm{r_o}$, where the average receive power ${\overline P}_\mathrm{r_o}$ is given by \textbf{Theorem} 1.

The average achievable rate can be expressed as
\begin{align}\label{throughput_exp}
\mathbb{R}={\left(1-\phi\right)}\frac{1}{{\ln 2}}\int_0^\infty  {\frac{1 - {F_{\mathrm{SINR}}}\left( x \right)}{{1+x}}} dx,
\end{align}
where ${F_{\mathrm{SINR}}}\left( x \right)$ is the CDF of the receive SINR at a typical serving BS. Since a simple expression for ${F_{\mathrm{SINR}}}\left( x \right)$ is intractable, we derive an upper bound of the average achievable rate as
\begin{align}\label{throughput_exp_upper}
\mathbb{R}^{\mathrm{Upper}}={\left(1-\phi\right)}\frac{1}{{\ln 2}}\int_0^\infty  {\frac{{1 - {F_{\mathrm{SNR}}}\left( x \right)}}{{1+x}}} dx,
\end{align}
where ${F_{\mathrm{SNR}}}$ is the CDF of the receive SNR at a typical serving BS without mmWave inter-cell uplink interference, which is presented in the following theorem.
\begin{theorem}
\emph{The CDF expression for the receive SNR at a typical serving BS is given by}
\begin{align}\label{upper_CDF_Uplink}
&\hspace{-0.6cm} {F_{\mathrm{SNR}}}\left( x \right)={\rm{\mathbf{1}}}\left(\mathrm{D}>\Delta_1\right){{2\pi \rho }} \int_0^\mathrm{D} t{f_{\Pr }}
\left( t \right){e^{ - 2\pi \rho \left[ {\Theta \left( t \right) + \Xi \left( {{\varphi_\mathrm{LoS}}
\left( t \right)} \right)} \right]}}  {dt}\nonumber\\
&\hspace{-0.6cm}+{{2\pi \rho }} \int_{\max\left\{\mathrm{D},\Delta_1\right\}}^\infty t{f_{\Pr }}
\left( t \right){e^{ - 2\pi \rho \left[ {\Theta \left( t \right) + \Xi \left( {{\varphi_\mathrm{LoS}}
\left( t \right)} \right)} \right]}}  {dt}\nonumber\\
&\hspace{-0.6cm}+{\rm{\mathbf{1}}}\left(\mathrm{D}>\Delta_2\right){{2\pi \rho }}\int_0^\mathrm{D} t{\left(1-f_{\Pr }\left( t \right)\right)}
{e^{ - 2\pi \rho \left[ \Theta \left(\varphi_\mathrm{NLoS}\left(t\right)\right)+ {\Xi \left( t\right)} \right]}} {dt}\nonumber\\
&\hspace{-0.6cm}+{{2\pi \rho }}\int_{\max\left\{\mathrm{D},\Delta_2\right\}}^\infty t{\left(1-f_{\Pr }\left( t \right)\right)}
{e^{ - 2\pi \rho \left[ \Theta \left(\varphi_\mathrm{NLoS}\left(t\right)\right)+ {\Xi \left( t\right)} \right]}} {dt},
\end{align}
\emph{where ${\rm{\mathbf{1}}}\left(A\right)$ is the indicator function that returns one if the condition $A$ is satisfied, $P_{\mathrm{u}}=\eta\frac{\phi}{\left(1-\phi\right)}{\overline P}_\mathrm{r_o}$ with ${\overline P}_\mathrm{r_o}$ given by \eqref{average_power_transfer_1}, $\Delta_1= \left(\frac{MN{P_{\mathrm{u}}}\beta_\mathrm{LoS}}{x \delta^2}\right)^{1/\alpha_\mathrm{LoS}}$, and $\Delta_2=\left(\frac{MN{P_{\mathrm{u}}}\beta_\mathrm{NLoS}}{x \delta^2}\right)^{1/\alpha_\mathrm{NLoS}}$.}
\end{theorem}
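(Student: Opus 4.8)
The plan is to compute $F_{\mathrm{SNR}}(x)=\Pr(\mathrm{SNR}\le x)$, where the SNR is obtained from \eqref{Eq_1_inform} by discarding the inter-cell uplink interference term $\mathrm{I}_{\mathrm{U}}$, giving $\mathrm{SNR}=MN P_{\mathrm{u}} L(\max\{R_o,\mathrm{D}\})/\delta^2$. Since the serving BS is either LoS or NLoS, I would first invoke the law of total probability to write $F_{\mathrm{SNR}}(x)=\Lambda_{\mathrm{LoS}}\Pr(\mathrm{SNR}\le x\mid\mathrm{LoS})+\Lambda_{\mathrm{NLoS}}\Pr(\mathrm{SNR}\le x\mid\mathrm{NLoS})$, so that the two path-loss laws can be handled separately through the conditional distance PDFs $f_R^{\mathrm{LoS}}$ and $f_R^{\mathrm{NLoS}}$ in \eqref{LoS_Distance_conditionPDF} and \eqref{NLoS_Distance_conditionPDF}.

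For the LoS branch I would substitute $L(\max\{R_o,\mathrm{D}\})=\beta_{\mathrm{LoS}}(\max\{R_o,\mathrm{D}\})^{-\alpha_{\mathrm{LoS}}}$ and invert $\mathrm{SNR}\le x$. Because $R\mapsto R^{-\alpha_{\mathrm{LoS}}}$ is strictly decreasing, the event is equivalent to $\max\{R_o,\mathrm{D}\}\ge\Delta_1$ with $\Delta_1=(MN P_{\mathrm{u}}\beta_{\mathrm{LoS}}/(x\delta^2))^{1/\alpha_{\mathrm{LoS}}}$, exactly the threshold in the statement. The delicate part is the short-range $\max$: I would split on the sign of $\Delta_1-\mathrm{D}$. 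When $\Delta_1\le\mathrm{D}$ the inequality $\max\{R_o,\mathrm{D}\}\ge\Delta_1$ holds for every $R_o\ge 0$, so $\Pr(\cdot\mid\mathrm{LoS})=1$ and the whole integral $\int_0^\infty f_R^{\mathrm{LoS}}$ appears; when $\Delta_1>\mathrm{D}$ it reduces to $R_o\ge\Delta_1$, i.e. $\int_{\Delta_1}^\infty f_R^{\mathrm{LoS}}$.

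These two regimes are precisely what the indicator $\mathbf{1}(\mathrm{D}>\Delta_1)$ multiplying the $[0,\mathrm{D}]$ piece, together with the lower limit $\max\{\mathrm{D},\Delta_1\}$ on the $[\,\cdot\,,\infty)$ piece, encodes: when $\mathrm{D}>\Delta_1$ both the $[0,\mathrm{D}]$ and the $[\mathrm{D},\infty)$ integrals survive and sum to $\int_0^\infty$, whereas when $\mathrm{D}\le\Delta_1$ the indicator term vanishes and only the $[\Delta_1,\infty)$ integral remains. Pushing the factor $\Lambda_{\mathrm{LoS}}$ through cancels the $1/\Lambda_{\mathrm{LoS}}$ normalisation in \eqref{LoS_Distance_conditionPDF}, leaving the prefactor $2\pi\rho$ and the kernel $t f_{\Pr}(t)\,e^{-2\pi\rho[\Theta(t)+\Xi(\varphi_{\mathrm{LoS}}(t))]}$ as written.

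Finally I would repeat the identical argument for the NLoS branch with $L(\cdot)=\beta_{\mathrm{NLoS}}(\cdot)^{-\alpha_{\mathrm{NLoS}}}$, threshold $\Delta_2=(MN P_{\mathrm{u}}\beta_{\mathrm{NLoS}}/(x\delta^2))^{1/\alpha_{\mathrm{NLoS}}}$, the conditional PDF \eqref{NLoS_Distance_conditionPDF}, and the indicator $\mathbf{1}(\mathrm{D}>\Delta_2)$, then add the two contributions to obtain \eqref{upper_CDF_Uplink}. I expect the only genuine subtlety to be the bookkeeping of the $\max\{R_o,\mathrm{D}\}$ cutoff, namely checking that the indicator-plus-$\max$ construction reproduces both the $\Delta\le\mathrm{D}$ and $\Delta>\mathrm{D}$ cases (including the boundary $\Delta=\mathrm{D}$) consistently; the inequality inversion and the integration itself are otherwise routine.
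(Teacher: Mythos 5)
Your proposal is correct and follows essentially the same route as the paper's own proof: the law of total probability over the LoS/NLoS association, inversion of the SNR inequality to the distance thresholds $\Delta_1$, $\Delta_2$, the case split on $\max\{R_o,\mathrm{D}\}$ encoded by the indicator and the $\max\{\mathrm{D},\Delta\}$ lower limit, and cancellation of $\Lambda_{\mathrm{LoS}}$, $\Lambda_{\mathrm{NLoS}}$ against the normalisations in \eqref{LoS_Distance_conditionPDF} and \eqref{NLoS_Distance_conditionPDF}. No gaps; your treatment of the short-range cutoff is, if anything, more explicit than the paper's.
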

\begin{proof}
Based on \eqref{Eq_1_inform}, the CDF of the receive SNR at a typical serving BS is defined as
\begin{align}\label{upper_CDF_Uplink_1}
&{F_{\mathrm{SNR}}}\left( x \right)=\Pr\left( \mathrm{SNR} < x \right)\nonumber\\
&=\Pr\left(\frac{MN{P_{\mathrm{u}}}L\left( \max\left\{  {{R_o}}, \mathrm{D}\right\} \right)}{\delta^2}<x\right).
\end{align}
Due to blockage effects, the user is connected to either a LoS BS or a NLoS BS. From the law of total probability, \eqref{upper_CDF_Uplink_1} can be reexpressed as
\begin{align}\label{upper_CDF_Uplink_2_step1}
{F_{\mathrm{SNR}}}\left( x \right)
&=\Lambda_\mathrm{LoS}\Pr\left(\frac{MN{P_{\mathrm{u}}}L\left( \max \left\{  {{R_o}}, \mathrm{D}\right\}\right)}{\delta^2}<x|\mathrm{LoS} \right)\nonumber\\
&+\Lambda_\mathrm{NLoS}\Pr\left(\frac{MN{P_{\mathrm{u}}}L\left( \max \left\{  {{R_o}}, \mathrm{D}\right\}\right)}{\delta^2}<x|\mathrm{NLoS} \right)\nonumber\\
&\mathop = \limits^{\left( a \right)} \Lambda_\mathrm{LoS} {\rm{\mathbf{1}}}\left(\mathrm{D}>\Delta_1\right) \Pr\left(R_o < \mathrm{D}|\mathrm{LoS} \right)\nonumber\\
&+\Lambda_\mathrm{LoS}\Pr\left(R_o>\max\left\{\mathrm{D},\Delta_1\right\}|\mathrm{LoS} \right)\nonumber\\
&+\Lambda_\mathrm{NLoS}   {\rm{\mathbf{1}}}\left(\mathrm{D}>\Delta_2\right) \Pr\left(R_o < \mathrm{D}|\mathrm{NLoS} \right)\nonumber\\
&+\Lambda_\mathrm{NLoS} \Pr\left(R_o>\max\left\{\mathrm{D},\Delta_2\right\}|\mathrm{NLoS} \right)
\end{align}
where (a) follows from the fact that the distance between the typical user and the typical serving BS should be larger than a minimum value $\Delta$ such that the receive SNR drops below a threshold $x$. For LoS, $\Delta=\Delta_1$, and for NLoS, $\Delta=\Delta_2$. As such, we can further calculate \eqref{upper_CDF_Uplink_2_step1} as
\begin{align}\label{upper_CDF_Uplink_2}
{F_{\mathrm{SNR}}}\left( x \right)&=\Lambda_\mathrm{LoS}  {\rm{\mathbf{1}}}\left(\mathrm{D}>\Delta_1\right)\int_0^{\mathrm{D}} f_R^{\mathrm{LoS}}\left( t \right) {dt} \nonumber\\
&+\Lambda_\mathrm{LoS} \int_{\max\left\{\mathrm{D},\Delta_1\right\}}^\infty f_R^{\mathrm{LoS}}\left( t \right) {dt}\nonumber\\
&+\Lambda_\mathrm{NLoS} {\rm{\mathbf{1}}}\left(\mathrm{D}>\Delta_2\right)\int_0^{\mathrm{D}} f_R^{\mathrm{NLoS}}\left( t \right) {dt} \nonumber\\
&+\Lambda_\mathrm{NLoS} \int_{\max\left\{\mathrm{D},\Delta_2\right\}}^\infty f_R^{\mathrm{NLoS}}\left( t \right) {dt}.
\end{align}

Substituting $f_R^{\mathrm{LoS}}\left( t \right)$ in \eqref{LoS_Distance_conditionPDF} and
$ f_R^{\mathrm{NLoS}}\left( t \right)$ in \eqref{NLoS_Distance_conditionPDF} into \eqref{upper_CDF_Uplink_2}, we obtain the desired result in \eqref{upper_CDF_Uplink}.
\end{proof}

By substituting \eqref{upper_CDF_Uplink} into \eqref{throughput_exp_upper}, the upper bound expression for the average achievable rate is obtained.

\section{Numerical Examples}
We provide numerical examples to understand the impact of the BS density and number of antennas on the harvested energy and achievable rate. The mmWave network is assumed to operate at 38 GHz, the mmWave BS transmit power is $P_{\mathrm{mm}}=43$ dBm, $\alpha_\mathrm{LoS}=2$, $\alpha_\mathrm{NLoS}=4$, $d=\frac{\lambda}{2}$, $\phi =0.5$ and  $\eta=0.5$. Let the LoS probability function be $f_\mathrm{Pr}\left(R\right)=e^{-\varrho R}$ with $\varrho=141.4$ meters~\cite{Tianyang_arxiv2014}. The mmWave bandwidth is $\mathrm{BW}=2$ GHz, the noise figure is $\mathrm{Nf}=10$ dB, the noise power $\delta^2=-174+10\log10\left(\mathrm{BW}\right)+\mathrm{Nf}$ dBm, and the reference distance $\mathrm{D}=1$. Each user is equipped with $N=16$ antennas.

\begin{figure}[!htp]
    \begin{center}
        \includegraphics[width=3.5in,height=3.25in]{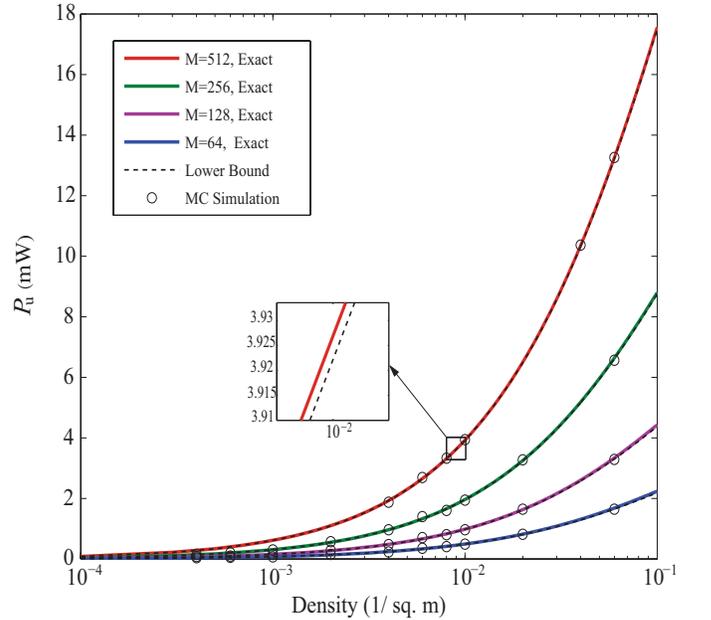}
        \caption{ The maximum continuous transmit power versus density with different $M$.}
        \label{fig:1}
    \end{center}
\end{figure}
Fig. \ref{fig:1} plots the maximum continuous transmit power $P_\mathrm{u}=\eta\frac{\phi}{\left(1-\phi\right)}
\mathbb{E}\left\{  P_\mathrm{r_o}\right\}$ versus BS density $\rho$ with different numbers of BS's antennas $M$. The exact curves are obtained based on \eqref{average_power_transfer_1}, and their lower bounds are obtained based on \eqref{average_transmit_power}. We first see that the exact curves have a precise match with the Monte Carlo (MC) simulations, which validate our theoretical analysis. We next see that the lower bounds tightly match with the corresponding exact curves, which indicates that in mmWave networks, the interference has little impact on the harvested power.  Furthermore, the use of large antenna arrays can effectively improve the harvested energy.

\begin{figure}[!htp]
    \begin{center}
        \includegraphics[width=3.5in,height=3.25in]{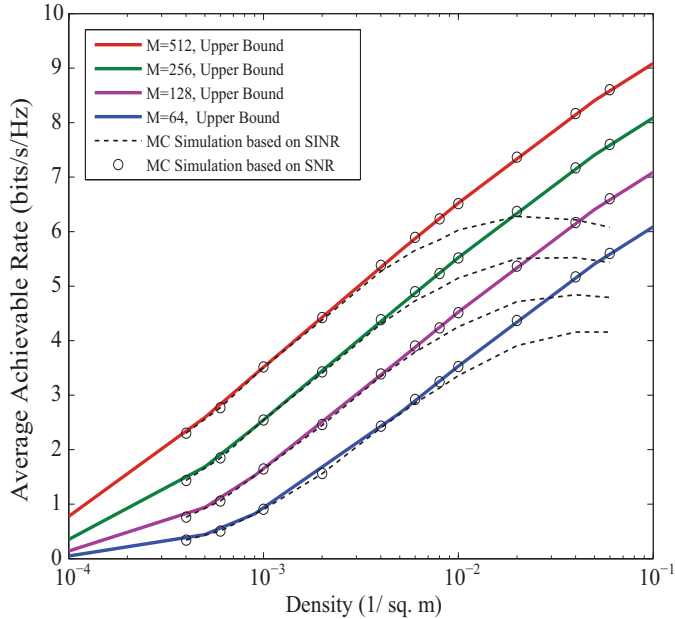}
        \caption{ The average achievable rate versus density with different $M$.}
        \label{fig:2}
    \end{center}
\end{figure}

Fig. \ref{fig:2} plots the average achievable rate  versus density  with different numbers of BS's antennas $M$. The solid curves obtained from \eqref{throughput_exp_upper} are the upper bound of the average achievable rate, and have a good match with MC simulations marked with $\circ$. The dash lines obtained from the MC simulations are the exact average achievable rate. We first see that increasing the number of antennas  brings additional array gains and improves the achievable rate. The mmWave transmission with large antenna arrays can achieve enormous throughput with the help of large bandwidth and large array gains. We also see that
the mmWave networks are noise-limited when BSs are not super dense, however, increasing BS density beyond a critical point, the mmWave networks will switch to be interference-limited, and the average achievable rate decreases with increasing BS density. Similar conclusions have also been mentioned in~\cite{Tianyang_arxiv2014}, where downlink performance is examined without considering
energy harvesting.

\section{Conclusion}

In this paper, the millimeter wave networks with wireless power transfer was took into account. Before uplink transmission, users harvested energy from its serving base station and the interfering base stations. We first derived the average harvested energy for the case of user with large energy storage, to examine the amount of  power transferred by  millimeter wave base stations. We then derived the average achievable uplink rate, to examine the performance of uplink transmission using the harvested energy. The results provide important insights into the design and application of the millimeter wave power transfer.


\end{document}